\DeclareSymbolFont{rsfscript}{OMS}{rsfs}{m}{n}
\DeclareSymbolFontAlphabet{\mathrsfs}{rsfscript}
\DeclareMathOperator{\rt}{rt}
\title{An Extremal Series of\\Eulerian Synchronizing Automata}
\author{Marek Szyku{\l}a\inst{1}\thanks{Supported in part by the National Science Centre, Poland under project number 2015/17/B/ST6/01893.}
\and Vojt\v{e}ch Vorel\inst{2}\thanks{Research supported by the Czech Science Foundation grant GA14-10799S and the GAUK grant No. 52215.}
} 
\authorrunning{M. Szyku{\l}a, V. Vorel} 
\institute{Institute of Computer Science, University of Wroc\l aw, Joliot-Curie 15, \\Wroc\l aw, Poland, \\ \email{msz@cs.uni.wroc.pl}, 
\and Faculty of Mathematics and Physics, Charles University, Malostransk{\'e} n{\'a}m. 25, \\Prague, Czech Republic,\\ \email{vorel@ktiml.mff.cuni.cz}}
\begin{document}
\maketitle

\begin{abstract}
We present an infinite series of $n$-state Eulerian automata whose reset words have length at least $(n^2-3)/2$.
This improves the current lower bound on the length of shortest reset words in Eulerian automata.
We conjecture that $(n^2-3)/2$ also forms an upper bound for this class
and we experimentally verify it for small automata by an exhaustive computation.

\medskip
\textbf{Keywords:} Eulerian automaton, reset threshold, reset word, synchronizing automaton
\end{abstract}

\section{Introduction}

A complete deterministic finite automaton is \emph{synchronizing} if there exists a word whose action maps all states to a single one. Such words are called \emph{reset words}.
Synchronizing automata find applications in various fields such as robotics, coding theory, bioinformatics, and model-based testing.
Besides of these, synchronizing automata are of great theoretical interest, mainly because of the famous \v{C}ern\'{y} conjecture \cite{Cerny1964}, which is one of the most long-standing open problems in automata theory.
The conjecture states that each synchronizing $n$-state automaton has a reset word of length at most $(n-1)^2$.
The best known general upper bound on this length is $\frac{1}{6}n^3-\frac{1}{6}n-1$ for each $n\ge 4$. \cite{Pin1983OnTwoCombinatorialProblems}.
Surveys on the field can be found in~\cite{KariVolkov2013Handbook,Volkov2008Survey}.

Major research directions in this field
include proving the \v{C}ern\'{y} conjecture for special classes of automata or showing specific upper bounds for them.
For example, the \v{C}ern\'{y} conjecture has been positively solved for the classes of monotonic automata \cite{Ep1990}, circular automata \cite{Dubuc1998}, Eulerian automata \cite{Kari2003Eulerian}, aperiodic automata \cite{Tr2007Aperiodic}, one-cluster automata with a prime-length cycle \cite{Steinberg2011OneClusterPrime}, automata respecting intervals of a directed graph \cite{GK2013AutomataRespectingIntervals} (under an inductive assumption), and automata with a letter of rank at most $\sqrt[3]{6n-6}$ \cite{BerlinkovSzykula2015AlgebraicSynchronizationCriterion}.
Moreover, there are many improvements of upper bounds for important special classes, for example, generalized and weakly monotonic automata \cite{AV2005SynchronizingGeneralizedMonotonicAutomata,Volkov2009ChainOfPartialOrders}, one-cluster automata \cite{BBP2011QuadraticUpperBoundInOneCluster}, quasi-Eulerian and quasi-one-cluster automata \cite{Berlinkov2013QuasiEulerianOneCluster}, and decoders of finite prefix codes \cite{AV2005SynchronizingGeneralizedMonotonicAutomata,BerlinkovSzykula2015AlgebraicSynchronizationCriterion,BiskupPlandowski2009HuffmanCodes}.
On the other hand, several lower bounds have been established by showing extremal series of automata for particular classes 
\cite{AV2005SynchronizingGeneralizedMonotonicAutomata,BiskupPlandowski2009HuffmanCodes,Cerny1964,Gusev2013LowerBoundsForTheLengthOfResetWordsInEulerianAutomata}.
Still, for many classes the best known upper bound does not match the lower bound.

In this paper we deal with the class of Eulerian automata,
which is one of the most remarkable classes due to its properties with regard to synchronization. In particular, the lengths of shortest words extending subsets are at most $n-1$ for each $n$-state Eulerian automaton~\cite{Kari2003Eulerian}, whereas they can be quadratic in general \cite{KisielewiczSzykula2015SynchronizingAutomataWithExtremalProperties}.
An upper bound $(n-1)(n-2)+1$ on the length of the shortest reset words for Eulerian automata was obtained by Kari~\cite{Kari2003Eulerian}.
Several generalizations of Eulerian automata were proposed: the class of pseudo-Eulerian automata \cite{Steinberg2011AveragingTrick}, for which the same bound $(n-1)(n-2)+1$ was obtained, unambiguous Eulerian automata \cite{CarpiDAllesandro2009StronglyTransitive} for which the \v{C}ern\'{y} bound $(n-1)^2$ was obtained, and quasi-Eulerian automata \cite{Berlinkov2013QuasiEulerianOneCluster}, for which a quadratic upper bound was obtained.
The best lower bound so far was $\frac{1}{2}n^2-\frac{3}{2}n+2$, found by Gusev \cite{Gusev2013LowerBoundsForTheLengthOfResetWordsInEulerianAutomata}.
A series whose shortest reset words seem to have length $\frac{1}{2}n^2-\frac{5}{2}$ was found by Martyugin (unpublished), but no proof has been established.
Further discussion on the bounds for Eulerian automata can be found in the survey~\cite{KariVolkov2013Handbook}.

Here we improve the lower bound by introducing an extremal series of Eulerian automata over a quaternary alphabet with the shortest reset words of length $\frac{1}{2}n^2-\frac{3}{2}$. To prove that, we use a technique of \emph{backward tracing}, which turns out to be very useful in analysis of extremal series of automata in general.
We conjecture that the new lower bound is tight for the class of Eulerian automata. Our exhaustive search over small automata did not find any counterexample.

The new series exhibits the extremal property that some of its subsets require extending words of length exactly $n-1$. This matches the upper bound, which was used in~\cite{Kari2003Eulerian} to obtain the best known upper bound $(n-1)(n-2)+1$ on the length of shortest reset words. Thus, possible improvements of the upper bound require a more subtle method.

\section{Preliminaries}

A \emph{deterministic finite automaton} (\emph{DFA}) is a triple $\mathcal{A} = (Q, \Sigma, \delta)$, where $Q$ is a finite non-empty set of \emph{states}, $\Sigma$ is a finite non-empty \emph{alphabet}, and $\delta\colon Q \times \Sigma \mapsto Q$ is a complete \emph{transition function}. We extend $\delta$ to $Q \times \Sigma^*$ and $2^Q \times \Sigma^*$ as usual.
When $\mathcal{A}$ is fixed, we write shortly $q\cdot w$ and $S\cdot w$ for $\delta(q,w)$ and $\delta(S,w)$ respectively.
The \emph{preimage} of $S\subset Q$ by $w\in \Sigma^*$ is defined as $$\delta^{-1} (S,w)=\{q \in Q\mid q\cdot w \in S\},$$ which is also denoted by $S\cdot w^{-1}$. If $S=\{q\}$ is a singleton, we write $q\cdot w^{-1}$.

A word $w \in \Sigma^*$ is a \emph{reset word} if $|Q \cdot w| = 1$. Note that in this case $Q \cdot w = \{q_0\}$ and $\{q_0\} \cdot w^{-1} = Q$ for some $q_0 \in Q$.
A DFA is called \emph{synchronizing} if it admits a reset word.
The \emph{reset threshold} of a synchronizing DFA $\mathcal{A}$ is the length of the shortest reset words and is denoted by $\rt(\mathcal{A})$.

A word $w$ \emph{extends} a subset $S \subset Q$ if $|S\cdot w^{-1}| > |S|$. In this case we say that $S$ is \emph{$w$-extensible}.

A DFA $\mathcal{A}$ is \emph{Eulerian} if the underlying digraph of $\mathcal{A}$ is strongly connected and the in-degree equals the out-degree for each vertex of the underlying digraph. Equivalently, at every vertex there must be exactly $|\Sigma|$ incoming edges.

We say that a word $w \in \Sigma^*$ is:
\begin{itemize}
\item \emph{permutational} if $Q \cdot w = Q$,
\item \emph{involutory} if $q \cdot w^2=q$ for each $q\in Q$,
\item \emph{unitary} if $p\cdot w \neq p$ holds for exactly one $p \in Q$.
\end{itemize}
Note that each involutory word is permutational.
Also, $w$ is unitary if and only if its action maps exactly one state to another one and fixes all the other states.
For $p,r\in Q$, we write $w = (p \to r)$ if the action of $w\in \Sigma^*$ is defined as $p\cdot w = r$ and $q\cdot w = q$ for each $q\in Q\setminus\{p\}$.

The reversal of a word $w$ is denoted by $w^\mathrm{R}$.
\begin{lemma}\label{pos equal to neg}
Let $\mathcal{A} = (Q, \Sigma, \delta)$ be a DFA. Let $w\in \Sigma^*$ contain only involutory letters. Then $S\cdot w^{-1}=S\cdot w^\mathrm{R}$ for each $S\subseteq Q$.
\end{lemma}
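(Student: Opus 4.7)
The plan is to reduce the statement to a single-letter observation and then iterate. First I would observe that if a letter $a\in\Sigma$ is involutory, meaning $q\cdot a^2 = q$ for every $q\in Q$, then the map $q\mapsto q\cdot a$ is a permutation of $Q$ which is its own inverse. Consequently, for every $S\subseteq Q$ the preimage coincides with the image: $S\cdot a^{-1} = S\cdot a$.

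Next I would use the standard identity that preimages compose in reverse order. For any word $w = a_1 a_2\cdots a_k$ and any $S\subseteq Q$,
\[
S\cdot w^{-1} = \bigl((\cdots(S\cdot a_k^{-1})\cdot a_{k-1}^{-1}\cdots)\bigr)\cdot a_1^{-1},
\]
which is immediate from unfolding the definition $\delta^{-1}(S,w) = \{q : q\cdot w \in S\}$ and can be verified by a short induction on $|w|$. If each letter of $w$ is involutory, applying the first observation to each $a_i^{-1}$ converts every preimage into a direct image, giving
\[
S\cdot w^{-1} = S\cdot a_k\cdot a_{k-1}\cdots a_1 = S\cdot w^{\mathrm{R}}.
\]

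There is no real obstacle here; the only thing to be careful about is the order-reversal when decomposing $w^{-1}$, which one handles either by induction on $|w|$ or by noting that preimage under a composition $uv$ satisfies $S\cdot(uv)^{-1} = (S\cdot v^{-1})\cdot u^{-1}$. Once this bookkeeping is in place, the statement follows directly from the involution-equals-its-own-inverse property of each letter.
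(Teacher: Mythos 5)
Your proposal is correct and follows essentially the same route as the paper: both arguments combine the reverse-order composition of preimages, $S\cdot(uv)^{-1}=(S\cdot v^{-1})\cdot u^{-1}$, with the observation that an involutory letter acts as a self-inverse permutation, so its preimage equals its image; the paper just packages this as an induction peeling off the first letter rather than unfolding the whole word at once.
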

\begin{proof}
If $|w|=0$, the claim is trivial. Inductively, let $w=xv$ for $x\in \Sigma$.
We have $S \cdot (xv)^{-1} = (S \cdot v^{-1})\cdot x^{-1} = (S \cdot v^\mathrm{R})\cdot x^{-1}$ by the inductive assumption, which is equal to $(S \cdot v^\mathrm{R})\cdot x^{-1}\cdot x^2 = (S \cdot v^\mathrm{R})\cdot x$ since $x$ is involutory.
\qed
\end{proof}

\section{Backward Tracing}

There exist several methods of proving reset thresholds of particular series of automata. Here we discuss one of them as a general approach, which we call \emph{backward tracing}.

\begin{definition}
Let $\mathcal{A}$ be a synchronizing DFA and let $u$ be a reset word for $\mathcal{A}$ with $Q\cdot u=\{q_0\}$. We say that $u$ is \emph{straight} if $$q_0 \cdot (u_{\mathbf{m}} u_{\mathbf{s}})^{-1}\not\subseteq q_0 \cdot(u_{\mathbf{s}})^{-1}$$ for each $u_{\mathbf{p}},u_{\mathbf{m}},u_{\mathbf{s}} \in \Sigma^*$ with $u_{\mathbf{p}} u_{\mathbf{m}} u_{\mathbf{s}} = u$.
\end{definition}
The following is a simple observation (cf.~\cite[Theorem 1]{KKS2015ComputingTheShortestResetWords}):
\begin{proposition}\label{prop:ShortestIsStraight}
In a synchronizing DFA each shortest reset word is straight.
\end{proposition}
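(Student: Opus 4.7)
My plan is to argue the contrapositive: if a reset word $u$ fails to be straight, then I can produce a strictly shorter reset word by excising the offending middle factor, so $u$ cannot have been shortest.

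Concretely, suppose $u$ is a reset word with $Q\cdot u = \{q_0\}$ which is not straight. Then there exists a factorization $u = u_{\mathbf{p}} u_{\mathbf{m}} u_{\mathbf{s}}$ (with $u_{\mathbf{m}}\neq\varepsilon$, since otherwise the inclusion in the definition becomes the trivial equality) such that
\[
  q_0 \cdot (u_{\mathbf{m}} u_{\mathbf{s}})^{-1} \subseteq q_0 \cdot (u_{\mathbf{s}})^{-1}.
\]
The first step is to observe that, because $u$ is a reset word, $Q \cdot u_{\mathbf{p}} \subseteq q_0 \cdot (u_{\mathbf{m}} u_{\mathbf{s}})^{-1}$: every state $q \in Q$ satisfies $q\cdot u_{\mathbf{p}} u_{\mathbf{m}} u_{\mathbf{s}} = q_0$, so $q\cdot u_{\mathbf{p}}$ lies in the preimage of $q_0$ under $u_{\mathbf{m}} u_{\mathbf{s}}$.

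Combining these two inclusions gives $Q \cdot u_{\mathbf{p}} \subseteq q_0 \cdot (u_{\mathbf{s}})^{-1}$, which unpacks to $Q \cdot u_{\mathbf{p}} u_{\mathbf{s}} = \{q_0\}$. Hence $u_{\mathbf{p}} u_{\mathbf{s}}$ is a reset word of length $|u| - |u_{\mathbf{m}}| < |u|$, contradicting the assumption that $u$ is shortest. This proves the proposition.

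There is essentially no obstacle here; the only point that requires care is handling the degenerate factorization $u_{\mathbf{m}} = \varepsilon$, which is vacuous under the natural reading of the definition (the inclusion becomes equality, and no word could ever be straight otherwise). The substance of the argument is the simple observation that, from the perspective of the preimage of $q_0$, the factor $u_{\mathbf{m}}$ does no useful work and can be deleted.
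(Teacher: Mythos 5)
Your proof is correct and is exactly the ``simple observation'' that the paper leaves unproved (deferring to the cited reference): if the preimage of $q_0$ under $u_{\mathbf{m}}u_{\mathbf{s}}$ is already contained in the preimage under $u_{\mathbf{s}}$, the factor $u_{\mathbf{m}}$ can be excised to yield a strictly shorter reset word. Your remark that the definition must implicitly require $u_{\mathbf{m}}\neq\varepsilon$ (otherwise no word is straight) is a valid and careful reading of the paper's definition.
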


The observation above leads to a method of proving reset thresholds of particular DFA series by analyzing subsets that are preimages of a singleton under the action of suffixes of length $i=1,2,\ldots,\rt(\mathcal{A})$ of straight reset words.
This works well if the number of such subsets is small in every step, i.e., for each $i$. Note that in general it can grow exponentially.

Interestingly, all known series of most extremal automata, such as
the \v{C}ern\'{y} automata having reset threshold $(n-1)^2$ \cite{Cerny1964}, the twelve known \emph{slowly synchronizing series} having only slightly smaller reset thresholds \cite{AGV2010,AGV2013,KisielewiczSzykula2015SynchronizingAutomataWithExtremalProperties}, and DFAs with cycles of two different lengths \cite{GusevPribavkina2014ResetThresholdsOfAutomataWithTwoCycleLengths},
have the property that the number of possible subsets in each step is bounded by a constant.
We call such series \emph{backward tractable}. 
It is worth mentioning that for such automata we can compute shortest reset words in polynomial time \cite{KKS2015ComputingTheShortestResetWords}.

In this paper, we apply this method to a new series of Eulerian automata, which is backward tractable as well, and whose construction is different from the other known extremal series; in particular, the letters act in many short cycles instead of few large ones.

The new DFAs use only permutational and unitary letters.
This property (which also implies an upper bound $2(n-1)^2$ on the reset threshold \cite{Rystsov2000SimpleIdempotents}) allows us to strengthen the restriction on suffixes to be considered within the backward tracing:
\begin{definition}
With respect to a fixed DFA $\mathcal{A}$, a reset word $u \in \Sigma^*$ with $Q \cdot u = \{q_0\}$ is \emph{greedy}, if for each suffix $v$ of $u$ it holds that: if some $x\in \Sigma$ extends $q_0 \cdot v^{-1}$, then $yv$ is a suffix of $u$ for some $y \in \Sigma$  that extends $q_0 \cdot v^{-1}$.
\end{definition}

\begin{lemma}\label{lem:exists a shortest greedy}
If a synchronizing DFA $\mathcal{A}$ has only permutational and unitary letters, then there exists a shortest reset word that is greedy.
\end{lemma}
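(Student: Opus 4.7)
The plan is an extremal exchange argument. Among all shortest reset words $u = u_1 \cdots u_k$ of $\mathcal{A}$, I fix one whose largest failing position---the largest $i \in \{1, \ldots, k\}$ at which the greedy condition fails (and $-\infty$ if no such $i$ exists)---is as small as possible. I show this choice forces $u$ to be greedy, by assuming otherwise and constructing a shortest reset word $u^*$ with a strictly smaller largest failing position, contradicting minimality.

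By Proposition~\ref{prop:ShortestIsStraight}, $u$ is straight. At the largest failing position $i$, some letter $x = (p \to r) \in \Sigma$ extends $S_i = q_0 \cdot (u_{i+1}\cdots u_k)^{-1}$ (so $r \in S_i$, $p \notin S_i$, and $S_i \cdot x^{-1} = S_i \cup \{p\}$), but $u_i$ does not extend. A unitary non-extending letter would yield $S_{i-1} = S_i$, contradicting straightness; hence $u_i$ is permutational, say $u_i = y$. The natural candidate for $u^*$ is the word obtained by replacing $u_i$ with $x$, namely $u^* = u_1 \cdots u_{i-1}\, x\, u_{i+1} \cdots u_k$; it has the same length $k$, inherits the suffix $u_{i+1}\cdots u_k$ (so that the preimages $S^*_j = S_j$ for $j \geq i$), hence inherits the greedy property at positions $> i$ from $u$; and at position $i$, $u^*_i = x$ is now extending. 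Provided $u^*$ is a reset word, its largest failing position is strictly less than $i$, contradicting the extremal choice.

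Thus the main obstacle is to show that such a $u^*$ exists as a length-$k$ reset word. The reset condition $(S_i \cup \{p\}) \cdot (u_1 \cdots u_{i-1})^{-1} = Q$ is ensured whenever $y^{-1}(S_i) \subseteq S_i \cup \{p\}$, because then the monotonicity $A \subseteq B \Rightarrow A \cdot w^{-1} \subseteq B \cdot w^{-1}$ together with $y^{-1}(S_i) \cdot (u_1 \cdots u_{i-1})^{-1} = Q$ (which holds since $u$ itself is a reset word) gives the desired equality. When this inclusion fails for every available extending letter, one must instead modify also the prefix $u_1 \cdots u_{i-1}$ by another word of length at most $i-1$ witnessing that $Q$ can be mapped into $S_i \cup \{p\}$; the existence of such a prefix is to be derived from the shortness of $u$, which forces tight distance estimates on preimages, combined with the preimage-monotonicity principle above and the bijectivity of $y$. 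This rearrangement is the technical heart of the argument and the step I expect to require the most care.
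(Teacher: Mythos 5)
Your extremal setup is essentially the paper's (minimizing the largest failing position is the same as maximizing the length of the shortest suffix violating greediness), and your observations that the offending letter $u_i$ must be permutational (by straightness) and that preimages are monotone under inclusion are both correct and used in the paper. However, the step you yourself flag as ``the technical heart'' is a genuine gap, and it is precisely where your replacement strategy breaks down. Since $u_i=y$ is permutational, $S_i\cdot y^{-1}$ is merely some set of the same cardinality as $S_i$; it can be entirely different from $S_i\cup\{p\}$, so the inclusion $S_i\cdot y^{-1}\subseteq S_i\cup\{p\}$ has no reason to hold, and your fallback --- that shortness of $u$ ``forces tight distance estimates'' yielding a prefix of length at most $i-1$ extending $S_i\cup\{p\}$ to $Q$ --- is asserted, not proved. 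Monotonicity only gives you a prefix of length $i$ (namely $u_1\cdots u_i$ itself, since $(S_i\cup\{p\})\cdot(u_1\cdots u_i)^{-1}\supseteq S_i\cdot(u_1\cdots u_i)^{-1}=Q$), which is one letter too long.

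The paper closes exactly this gap with an insert-then-delete argument rather than a replacement. Insert the extending letter $x$ in front of the violated suffix $v$ to get $u'=v'xv$ of length $k+1$; by monotonicity this is still a reset word. Then count: the inverse action of a permutational letter preserves cardinality and a unitary letter increases it by at most one, and by straightness every unitary occurrence in a shortest reset word is applied to an extensible set, so $u$ contains exactly $|Q|-1$ unitary occurrences. Since $x$ is unitary, $u'$ contains $|Q|$ of them, so at least one is applied to a non-extensible set; it must lie in $v'$, because $v$ was the shortest violating suffix and $x$ extends by choice. Deleting that wasted occurrence restores length $k$, preserves the reset property (a non-extending unitary letter only shrinks the preimage, so deleting it enlarges all subsequent preimages), and leaves the suffix $xv$ intact, strictly improving the extremal measure. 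Without this counting-and-deletion idea, your argument does not go through.
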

\begin{proof}
Let $\Sigma = \Sigma_{\mathrm{p}} \cup \Sigma_{\mathrm{u}}$, where $\Sigma_{\mathrm{p}}$ contains permutational letters and $\Sigma_\mathrm{u}$ contains unitary letters.

Suppose for a contradiction that there is no shortest reset word that is greedy.
Let $u$ be a shortest reset word of $\mathcal{A}$ with the property that its shortest suffix $v$ violating the
greediness is the longest possible.
In other words, the shortest suffix $yv$ of $u$, $y \in \Sigma$, such that some $x\in \Sigma$ extends $q_0 \cdot v^{-1}$ but $y$ doesn't, is the longest possible.

For each suffix $zt$ of $u$ with $z = (p \to q)\in \Sigma_{\mathrm{u}}$, the set $S=q_0 \cdot t^{-1}$ is necessarily $z$-extensible. Indeed, if $q \in S$ and $p \notin S$, then $S$ is clearly $z$-extensible. If $q \notin S$ or $p \in S$, then $S\cdot z^{-1}\subseteq S$, which contradicts Proposition~\ref{prop:ShortestIsStraight}.
Since the inverse actions of the letters from $\Sigma_{\mathrm{p}}$ preserve sizes of subsets, it follows that $u$ contains exactly $|Q|-1$ occurrences of unitary letters.

Write $u = v'v$ and let $u' = v'xv$.
Observe that $u'$ is also a reset word for $\mathcal{A}$:
$q_0 \cdot(xv)^{-1}$ is a (possibly proper) superset of $q_0 \cdot v^{-1}$; hence, we still have $q_0 \cdot(v'xv)^{-1}=Q$.
Since $u'$ contains $|Q|$ occurrences of letters from $\Sigma_{\mathrm{u}}$, and letters from $\Sigma_{\mathrm{p}}$ do not decrease the size of a subset, at least one occurrence of $y \in \Sigma_{\mathrm{u}}$ is not applied to an $y$-extensible subset.
Moreover, this occurrence lies within $v'$, because $v$ is the shortest suffix violating the greediness. Let $v''$ be the word obtained by removing that occurrence of $y$.
We have $u'' = v''xv$, $|u''| = |u|$, and the shortest suffix violating the greediness is longer than $v$.
This yields a contradiction with the choice of $u$.
\qed
\end{proof}

\section{The Extremal Series of Eulerian Automata}

Fix an arbitrary $m\ge1$. Let $N=4m+1$ and $\mathcal{A}_{m}=\left\langle Q,\Sigma,\delta\right\rangle$,
where $Q = \left\{ 0,1,\dots,N-1\right\}$, $\Sigma = \left\{ \alpha,\beta,\omega_{0},\omega_{1}\right\}$. The action of $\alpha$ and $\beta$ is defined by
\begin{eqnarray*}
q\cdot\alpha & = & (-q-1)\bmod N,\\
q\cdot\beta & = & (-q+1)\bmod N,
\end{eqnarray*}
for $q \in Q$, while the action of $\omega_{0}, \omega_{1}$ is defined by
\begin{eqnarray*}
\omega_{0} & = & (1\to 0),\\
\omega_{1} & = & (0\to 1).
\end{eqnarray*}
The automaton $\mathcal{A}_{m}$ is illustrated in Fig.~\ref{fig:A_m}.
We are going to prove that
\[
\rt(\mathcal{A}_{m}) = \frac{N^{2}-3}{2}.
\]
Throughout the proof we use usual operations and inequalities on integers.
Each use of modular arithmetic is described explicitly using the binary
operator \hbox{``$\bmod$''}.
\begin{figure}
\begin{centering}
\includegraphics[scale=1]{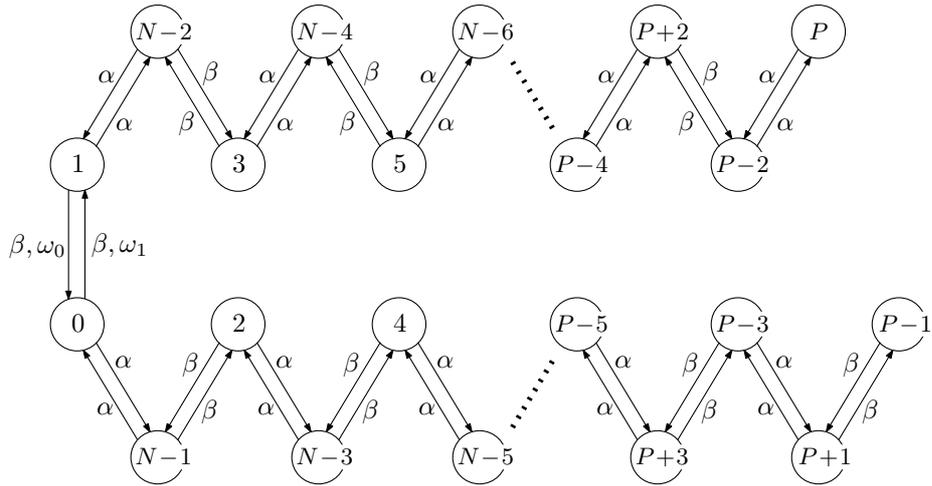}
\par\end{centering}
\caption{The DFA $\mathcal{A}_{m}$, loops are omitted, $P=\frac{N+1}{2}$
\label{fig:A_m}
}
\end{figure}

We use backward tracing to show that there is a unique optimal way to extend a singleton to $Q$.
Note that $\omega_{0}$ and $\omega_{1}$ are unitary, while $\alpha$ and $\beta$ are involutory.
The following notation will be very useful in the analysis of reset words for $\mathcal{A}_{m}$:\\
For $j=0,\ldots,N$ we set:
\begin{center}$\begin{array}{lllclll}
Q_{j} & = & \left\{q \mid 0\le q\le j-1\right\},&\quad&
R_{j} & = & Q_{j}\cdot\beta,\\
Q_{j}^{\diamond} & = & Q_{j}\setminus\left\{ 0\right\} = \left\{q \mid 1 \le q \le j\right\},&\quad&
R_{j}^{\diamond} & = & R_{j}\setminus\left\{ 1\right\} = Q_{j}^{\diamond}\cdot\beta.
\end{array}$\end{center}

\subsection{Construction of a Reset Word}

For an odd $i\ge1$, we define
$$t_{i}=\alpha\left(\beta\alpha\right)^{\frac{i-1}{2}}.$$
Note that:
\begin{enumerate}
\item $\left|t_{i}\right|=i$,
\item $t_{i}$ is a palindrome (i.e., $t_i = t^\mathrm{R}_i$).
\end{enumerate}
By Lemma~\ref{pos equal to neg}, $S \cdot t_{i} = S \cdot t^\mathrm{R}_i = S \cdot t^{-1}_i$ for each $S\subseteq Q$, and we will often interchange $t_{i}^{-1}$ with $t_{i}$. It follows that $qt^2_{i} = qt_{i}t^{-1}_{i} = q$ for every $q \in Q$ and $t_{i}$ is involutory.

\begin{lemma}\label{lem:expl tr}
Let $q\in Q$. It holds that:
\begin{enumerate}
\item\label{enu:kba} $q\cdot\left(\beta\alpha\right)^{h}=(q-2h)\bmod N$
for each $h\ge0$,
\item\label{enu:tNj jk} $q\cdot t_{i}=(-q-i)\bmod N$ for each $i\ge1$.
\end{enumerate}
\end{lemma}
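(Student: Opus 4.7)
The plan is to prove the two statements by straightforward computation based on the definitions of $\alpha$ and $\beta$, with induction handling the iterated application in part~\ref{enu:kba}, and then deriving part~\ref{enu:tNj jk} as an immediate consequence.

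First I would establish the base identity $q \cdot \beta\alpha = (q-2) \bmod N$. This follows by composing the two definitions: $q \cdot \beta = (-q+1) \bmod N$, so applying $\alpha$ gives $(-(-q+1)-1) \bmod N = (q-2) \bmod N$. With this one-step identity in hand, part~\ref{enu:kba} is a routine induction on $h$: the case $h=0$ is trivial, and assuming $q \cdot (\beta\alpha)^{h} = (q-2h) \bmod N$, applying $\beta\alpha$ once more yields $((q-2h)-2) \bmod N = (q-2(h+1)) \bmod N$.

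For part~\ref{enu:tNj jk}, I would simply use the definition $t_i = \alpha(\beta\alpha)^{(i-1)/2}$ (well-defined since $i$ is odd). Computing left to right: $q \cdot \alpha = (-q-1) \bmod N$, and then applying part~\ref{enu:kba} with $h = (i-1)/2$ to the state $(-q-1) \bmod N$ gives
\[
q \cdot t_i = \bigl((-q-1) - 2 \cdot \tfrac{i-1}{2}\bigr) \bmod N = (-q - i) \bmod N,
\]
which is the claimed formula.

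The only thing to watch is that modular reductions compose correctly, i.e., that we may freely drop and reintroduce the $\bmod N$ at intermediate stages; this is standard and causes no real obstacle. There is no substantive difficulty here — the lemma is a direct unpacking of the action of $\alpha$ and $\beta$, and its role in the paper is to provide convenient closed-form expressions used repeatedly in the subsequent backward-tracing analysis.
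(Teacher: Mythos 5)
Your proposal is correct and follows essentially the same route as the paper: verify the single-step identity $q\cdot\beta\alpha=(q-2)\bmod N$ by direct composition, extend to general $h$ by (implicit) induction, and obtain part~2 by applying $\alpha$ first and then part~1 with $h=\frac{i-1}{2}$. Nothing further is needed.
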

\begin{proof}
The first claim follows trivially from the case of $h=1$.
In this case we have $\left(q\cdot\beta\right)\cdot\alpha=(-\left(-q+1\right)-1)\bmod N=(q-2)\bmod N$.
For the second claim we observe $k\cdot t_{i}=\left(k\cdot\alpha\right)\cdot\left(\beta\alpha\right)^{\frac{i-1}{2}}$, which equals $(-q-1-\left(i-1\right))\bmod N=(-q-i)\bmod N$.
\qed
\end{proof}

\begin{lemma}\label{lem:cikcak list}
Let $2\le j\le N-2$. It holds that:
\begin{enumerate}
\item \label{enu:QjtNj}$Q_{j}\cdot t_{N-j}=Q_{j+1}^{\diamond}$ if $j$
is even,
\item \label{enu:Rjtj}$R_{j}\cdot t_{j-2}=R_{j+1}^{\diamond}$ if $j$
is odd,
\item \label{enu:QtQ}$Q_{j+1}\cdot t_{N-j}=Q_{j+1}$ if $j$ is even,
\item \label{enu:RtR}$R_{j+1}\cdot t_{j-2}=R_{j+1}$ if $j$ is odd.
\end{enumerate}
\end{lemma}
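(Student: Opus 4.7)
The engine of the argument is Lemma~\ref{lem:expl tr}\ref{enu:tNj jk}, which gives the closed form $q\cdot t_i = (-q-i)\bmod N$. Combined with $q\cdot\beta=(-q+1)\bmod N$, this lets me compute all four images by pure arithmetic. I would start by unfolding the four basic subsets explicitly: $Q_j=\{0,1,\dots,j-1\}$, $Q_j^\diamond=\{1,\dots,j\}$, and
\[
R_j = \{1,0,N-1,N-2,\dots,N-j+2\}, \qquad R_j^\diamond = R_j\setminus\{1\}.
\]
The role of the parity hypotheses is merely to guarantee that $N-j$ and $j-2$ are odd, so that $t_{N-j}$ and $t_{j-2}$ are defined (recall $N$ is odd).

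For part~\ref{enu:QjtNj}, I would apply $q\mapsto(j-q)\bmod N$ (which is $(-q-(N-j))\bmod N$) to the integers $0,1,\dots,j-1$; since $2\le j\le N-2$, the results $j,j-1,\dots,1$ land in $\{1,\dots,N-1\}$ without wrapping, and the image set is exactly $Q_{j+1}^\diamond$. Part~\ref{enu:Rjtj} is the analogous calculation: applying $r\mapsto(2-j-r)\bmod N$ to the list of elements of $R_j$ produces $\{N-j+1,N-j+2,\dots,N-1,0\}$, and this coincides with $R_{j+1}^\diamond=\{0,N-1,\dots,N-j+1\}$.

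Parts~\ref{enu:QtQ} and~\ref{enu:RtR} I would deduce as short corollaries of parts~\ref{enu:QjtNj} and~\ref{enu:Rjtj}. Since $Q_{j+1}=Q_j\cup\{j\}$ and $j\cdot t_{N-j}=0$, the image $Q_{j+1}\cdot t_{N-j}$ equals $Q_{j+1}^\diamond\cup\{0\}=Q_{j+1}$. Similarly, $R_{j+1}$ is $R_j$ together with the one extra element $(1-j)\bmod N=N-j+1$, whose image under $t_{j-2}$ is $(2-j-(N-j+1))\bmod N=1$; adjoining $1$ to $R_{j+1}^\diamond$ recovers all of $R_{j+1}$.

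The only real obstacle is clerical: keeping the modular reductions within the stated ranges, and checking that the arithmetic progressions match up at the endpoints. The bound $2\le j\le N-2$ is exactly what prevents wrap-around from distorting the intended sets, so careful bookkeeping on those endpoints is the one place where an error could slip in.
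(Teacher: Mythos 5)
Your proposal is correct and follows essentially the same route as the paper: parts (1) and (2) by direct computation with the closed form $q\cdot t_i=(-q-i)\bmod N$ from Lemma~\ref{lem:expl tr}, and parts (3) and (4) as one-line corollaries obtained by adjoining the image of a single extra element (the paper peels off $0$, resp.\ $1$, from $Q_{j+1}$, resp.\ $R_{j+1}$, and invokes involutority of $t_{N-j}$ and $t_{j-2}$, whereas you peel off $j$, resp.\ $N-j+1$, and reuse parts (1)--(2) directly --- an immaterial difference). The endpoint arithmetic you flag as the only risk indeed checks out.
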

\begin{proof}
For (1) and (2) we use Lemma~\ref{lem:expl tr}(\ref{enu:tNj jk}) with $i=N-j$ and $i=j-2$ respectively and then substitute $d=j-q$:
\begin{eqnarray*}
Q_{j}\cdot t_{N-j} & = & \left\{ j-q\mid q\in Q_{j}\right\} = \left\{ d\mid1\le d\le j\right\} = Q_{j+1}^{\diamond},\\
R_{j}\cdot t_{j-2} & = & \left\{ q\cdot\beta t_{j-2}\mid q\in Q_{j}\right\} = \left\{(-(-q+1)-\left(j-2\right))\bmod N\mid q\in Q_{j}\right\} =\\
 & = & \left\{(q-j+1)\bmod N\mid q\in Q_{j}\right\} = \left\{ (-d+1)\bmod N\mid1\le d\le j\right\} =\\
 & = & \left\{ d\cdot\beta\mid1\le d\le j\right\} =R_{j+1}^{\diamond}.
\end{eqnarray*}
For (3) and (4) we use (\ref{enu:QjtNj}) and (\ref{enu:Rjtj}) respectively and the fact that $t_{N-j}$ and $t_{j-2}$ are involutory. We have:
\begin{eqnarray*}
Q_{j+1}\cdot t_{N-j} & = & Q_{j+1}^{\diamond}\cdot t_{N-j}\cup\left\{ 0\cdot t_{N-j}\right\} =Q_{j}\cup\left\{ j\right\} =Q_{j+1},\\
R_{j+1}\cdot t_{j-2} & = & R_{j+1}^{\diamond}\cdot t_{j-2}\cup\left\{ 1\cdot t_{j-2}\right\} =R_{j}\cup\left\{ N-j+1\right\} \\
 & = & R_{j}\cup\left\{ j\cdot\beta\right\} = R_{j+1}.
\end{eqnarray*}
\qed
\end{proof}

\noindent Let
\begin{equation*}\label{eq:def of w}
w=v_{N-1}\beta v_{N-2}\beta v_{N-3}\dots\beta v_{3}\beta v_{2},
\end{equation*}
where
\[
v_{j}=\begin{cases}
\omega_{1}t_{N-j} & \mbox{if }j\mbox{ is even},\\
\omega_{0}t_{j-2} & \mbox{if }j\mbox{ is odd}.
\end{cases}
\]
In Lemma~\ref{lem:analysis of w} below, we show that $w$ extends $Q_2$ to $Q_N$ according to the following scheme:
\[
\begin{array}{c}
Q_{2}\overset{v_{2}^{-1}}{\mapsto}Q_{3}\overset{\beta^{-1}}{\mapsto}R_{3}\overset{v_{3}^{-1}}{\mapsto}R_{4}\overset{\beta^{-1}}{\mapsto}Q_{4}\overset{v_{4}^{-1}}{\mapsto}Q_{5}\overset{\beta^{-1}}{\mapsto}R_{5}\overset{v_{5}^{-1}}{\mapsto}R_{6}\overset{\beta^{-1}}{\mapsto}Q_{6}\mapsto\cdots\\
\cdots\mapsto Q_{N-3}\overset{v_{N-3}^{-1}}{\mapsto}Q_{N-2}\overset{\beta^{-1}}{\mapsto}R_{N-2}\overset{v_{N-2}^{-1}}{\mapsto}R_{N-1}\overset{\beta^{-1}}{\mapsto}Q_{N-1}\overset{v_{N-1}^{-1}}{\mapsto}Q_{N},
\end{array}
\]
and thus the word $w\omega_0$ is a reset word for $\mathcal{A}_m$.
\begin{remark}\label{enu:po alfa je beta}
The word $w$ ends with $\alpha$. The other occurrences of $\alpha$ in $w$ are directly followed by $\beta$.
\end{remark}
\begin{remark}
A set $S\subseteq Q$ is:
\begin{itemize}
\item \label{enu:o2}$\omega_{0}$-extensible if and only if $S\cap\left\{ 0,1\right\} =\left\{ 0\right\} $,
\item \label{enu:o1}$\omega_{1}$-extensible if and only if  $S\cap\left\{ 0,1\right\} =\left\{ 1\right\} $.
\end{itemize}
\end{remark}
We say that a set $S\subseteq Q$ is \emph{$\omega$-extensible} if it is $\omega_{0}$-extensible or $\omega_{1}$-extensible.

\begin{lemma}\label{lem:analysis of w}
Let $2\le j\le N-1$. It holds that:
\begin{enumerate}
\item $Q_{2}\cdot\left(v_{j}\beta v_{j-1}\dots\beta v_{2}\right)^{-1}=Q_{j+1}$
if $j$ is even,
\item $Q_{2}\cdot\left(v_{j}\beta v_{j-1}\dots\beta v_{2}\right)^{-1}=R_{j+1}$
if $j$ is odd,
\item \label{enu:w je reset}$w\omega_{0}$ is a reset word of $\mathcal{A}_{m}$.
\end{enumerate}
\end{lemma}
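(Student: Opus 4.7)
The plan is to prove statements (1) and (2) simultaneously by induction on $j = 2, 3, \ldots, N-1$, and then derive (3) as an immediate consequence of (1) at $j = N-1$. At each step, I decompose the preimage using the identity $S \cdot (ab)^{-1} = (S \cdot b^{-1}) \cdot a^{-1}$, and exploit that $\beta$ and each $t_i$ are involutory, so their inverse action equals their forward action (by Lemma~\ref{pos equal to neg}).

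For the base case $j = 2$, the product collapses to $v_2 = \omega_1 t_{N-2}$, so $Q_2 \cdot v_2^{-1} = (Q_2 \cdot t_{N-2}) \cdot \omega_1^{-1}$. Lemma~\ref{lem:cikcak list}(\ref{enu:QjtNj}) with $j = 2$ gives $Q_2 \cdot t_{N-2} = Q_3^{\diamond} = \{1, 2\}$; since $0 \notin \{1,2\}$ but $0 \cdot \omega_1 = 1 \in \{1,2\}$, the $\omega_1$-preimage adjoins $0$ and we obtain $Q_3$, confirming (1). For the inductive step, factor as $v_j \beta (v_{j-1} \beta \cdots v_2)$. If $j$ is even, the inductive hypothesis applied to the trailing factor (whose index $j-1$ is odd) yields $R_j$; then $\beta^{-1}$ produces $R_j \cdot \beta = Q_j$ (since $R_j = Q_j \cdot \beta$ and $\beta^2 = \mathrm{id}$). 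It remains to check $Q_j \cdot v_j^{-1} = Q_{j+1}$: with $v_j = \omega_1 t_{N-j}$, Lemma~\ref{lem:cikcak list}(\ref{enu:QjtNj}) gives $Q_j \cdot t_{N-j} = Q_{j+1}^{\diamond}$, and the $\omega_1$-preimage readjoins $0$ to yield $Q_{j+1}$. The odd case is symmetric: the hypothesis gives $Q_j$, $\beta^{-1}$ turns it into $R_j$, and using $v_j = \omega_0 t_{j-2}$ together with Lemma~\ref{lem:cikcak list}(\ref{enu:Rjtj}) we obtain $R_j \cdot t_{j-2} = R_{j+1}^{\diamond}$, after which the $\omega_0$-preimage readjoins $1$ to produce $R_{j+1}$.

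For (3), since $N = 4m+1$ is odd, $N-1$ is even, and statement (1) at $j = N-1$ yields $Q_2 \cdot w^{-1} = Q_N = Q$. As $\{0\} \cdot \omega_0^{-1} = \{0, 1\} = Q_2$, we get $\{0\} \cdot (w\omega_0)^{-1} = Q_2 \cdot w^{-1} = Q$, so $Q \cdot (w\omega_0) = \{0\}$ and $w\omega_0$ is indeed a reset word.

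The argument has no deep step; the only thing to watch is bookkeeping. Specifically, at each unitary-letter preimage step one must verify that the element being readjoined ($0$ for $\omega_1$, $1$ for $\omega_0$) is genuinely missing from the current set, i.e.~that the set is $\omega_i$-extensible in the right direction. This reduces to checking $0 \notin Q_{j+1}^{\diamond}$ and $1 \notin R_{j+1}^{\diamond}$, both immediate from the definitions of the diamond sets.
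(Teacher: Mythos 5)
Your proof is correct and follows essentially the same route as the paper's: induction on $j$ with the base case $j=2$, the decomposition $v_j\beta(v_{j-1}\beta\cdots v_2)$, and the appeal to Lemma~\ref{lem:cikcak list}(\ref{enu:QjtNj}) or (\ref{enu:Rjtj}) depending on parity, followed by the unitary-letter preimage step; part~(3) is derived identically from the case $j=N-1$. The only difference is that you spell out the bookkeeping for the $\omega_0^{-1}$ and $\omega_1^{-1}$ steps slightly more explicitly than the paper does, which is harmless.
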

\begin{proof}
We prove the first two claims by induction.
For $j=2$, using Lemma~\ref{lem:cikcak list}(\ref{enu:QjtNj}) we have:
$$Q_{2}\cdot v_{2}^{-1} = (Q_{2}\cdot t_{N-2}^{-1})\cdot\omega_{1}^{-1} = Q_{3}^{\diamond}\cdot\omega_{1}^{-1}=Q_{3}.$$
Next, take $j\ge2$ and suppose that both the claims hold for $j-1$.
We use the induction hypothesis and, depending on the parity of $j$, Lemma~\ref{lem:cikcak list}(\ref{enu:QjtNj}) or Lemma
\ref{lem:cikcak list}(\ref{enu:Rjtj}) respectively.
For an even $j$ we have:
\begin{eqnarray*}
Q_{2}\cdot\left(v_{j}\beta v_{j-1}\dots\beta v_{2}\right)^{-1} & = & R_{j}\cdot\left(v_{j}\beta\right)^{-1}=Q_{j}\cdot v_{j}^{-1} = Q_{j}\cdot\left(\omega_{1}t_{N-j}\right)^{-1} =\\
& = & Q_{j+1}^{\diamond}\cdot\omega_{1}^{-1}=Q_{j+1},
\end{eqnarray*}
and for an odd $j$ we have: 
\begin{eqnarray*}
Q_{2}\cdot\left(v_{j}\beta v_{j-1}\dots\beta v_{2}\right)^{-1} & = & Q_{j}\cdot\left(v_{j}\beta\right)^{-1}=R_{j}\cdot v_{j}^{-1}=R_{j}\cdot\left(\omega_{0}t_{j-2}\right)^{-1}=\\
 & = & R_{j+1}^{\diamond}\cdot\omega_{0}^{-1}=R_{j+1}.
\end{eqnarray*}
The claim~(\ref{enu:w je reset}) follows from $Q_{1}\cdot\left(w\omega_{0}\right)^{-1}=Q_{2}\cdot w^{-1}=Q_{N}$, according to the first claim with $j=N-1$.
\qed
\end{proof}

\noindent It remains to calculate the length of $w$.
\begin{lemma}
The length of $w$ is $\frac{N^{2}-5}{2}$.
\end{lemma}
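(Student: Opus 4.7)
The plan is a direct counting/summation: read off $|v_j|$ from the definition, sum over $j=2,\dots,N-1$, and add the $\beta$'s.

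First I would record that $|v_j|=1+(N-j)$ when $j$ is even and $|v_j|=1+(j-2)=j-1$ when $j$ is odd. There are $N-2$ blocks $v_j$ in $w$ and $N-3$ separating occurrences of $\beta$, so
\[
|w| \;=\; (N-3)\;+\;\sum_{\substack{2\le j\le N-1\\ j\text{ even}}} (N+1-j)\;+\;\sum_{\substack{2\le j\le N-1\\ j\text{ odd}}} (j-1).
\]

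Next, since $N=4m+1$ is odd, the even $j$'s in $\{2,\dots,N-1\}$ are $2,4,\dots,N-1$ (that is, $2m$ values) and the odd $j$'s are $3,5,\dots,N-2$ (that is, $2m-1$ values). I would substitute $j=2k$ in the first sum ($k=1,\dots,2m$) to obtain
\[
\sum_{k=1}^{2m}(N+1-2k)\;=\;2m(N+1)-2m(2m+1)\;=\;2m(2m+1),
\]
using $N=4m+1$. Similarly, substituting $j=2k+1$ in the second sum ($k=1,\dots,2m-1$) gives
\[
\sum_{k=1}^{2m-1}(2k)\;=\;2m(2m-1).
\]

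Adding these and the $N-3=4m-2$ $\beta$'s yields $|w|=2m(2m+1)+2m(2m-1)+(4m-2)=8m^2+4m-2$. Finally I would compare with $\tfrac{N^2-5}{2}=\tfrac{(4m+1)^2-5}{2}=\tfrac{16m^2+8m-4}{2}=8m^2+4m-2$, which matches exactly. There is no real obstacle here; the only care needed is to correctly count how many even and odd values of $j$ fall in the range, which is clean because $N$ is odd.
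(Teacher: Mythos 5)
Your proof is correct and follows essentially the same route as the paper: read off $|v_j|$ from the definition, split the sum over even and odd $j$, and add the $N-3$ occurrences of $\beta$. The only cosmetic difference is that you substitute $N=4m+1$ and compute in terms of $m$, while the paper keeps the sums in terms of $N$; the arithmetic agrees in both forms.
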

\begin{proof}
The sum of $\left|v_{i}\right|$
with even $i$ is 
\[
\sum_{i=1}^{\frac{N-1}{2}}\left(1+N-2i\right)=\frac{\left(N-1\right)\left(1+N\right)}{2}-\frac{\left(N-1\right)\left(1+N\right)}{4}=\frac{1}{4}\left(N^{2}-1\right),
\]
and the sum of $\left|v_{i}\right|$ with odd $i$ is 
\[
\sum_{i=1}^{\frac{N-3}{2}}2i=\frac{\left(N-3\right)\left(N-1\right)}{4}=\frac{1}{4}\left(N^{2}-4N+3\right).
 \]
\end{proof} 
Together with the $N-3$ occurrences of $\beta$, we have $\left|w\right|=\frac{N^{2}-5}{2}$.
\qed
\noindent Thus, we have that $w\omega_0$ is a reset word for $\mathcal{A}_m$ with length $\left|w\omega_0\right|=\frac{N^{2}-3}{2}$.

\subsection{Lower Bound on the Reset Threshold}

Finally, let us show that no reset word for $\mathcal{A}_k$ is shorter than $w\omega_0$.

\begin{lemma}\label{lem:factors o0B,o1B}
If $v\in \Sigma^*$ is greedy and straight reset word with $Q\cdot v= \{q_0\}$, then $v$ does not contain $\omega_{0}\beta$ nor $\omega_{1}\beta$ as a factor.
\end{lemma}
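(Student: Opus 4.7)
The plan is a proof by contradiction. Suppose $v$ contains $\omega_0\beta$ as a factor; the case of $\omega_1\beta$ will be entirely symmetric, with the roles of $\omega_0$ and $\omega_1$ swapped. Write $v = v_{\mathbf{p}}\omega_0\beta v_{\mathbf{s}}$ and set $T = q_0\cdot(\beta v_{\mathbf{s}})^{-1}$ and $S = q_0\cdot v_{\mathbf{s}}^{-1}$. Because $\beta$ is involutory, Lemma~\ref{pos equal to neg} gives $T = S\cdot\beta^{-1} = S\cdot\beta$.

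The core step is to combine straightness with the swap action of $\beta$. Applying the definition of straightness to the decomposition with middle part $u_{\mathbf{m}} = \omega_0$ yields $T\cdot\omega_0^{-1} \not\subseteq T$, so $\omega_0$ genuinely enlarges $T$; by the characterization of $\omega_0$-extensibility stated in the remark above, this forces $T\cap\{0,1\} = \{0\}$. Now $\beta$ swaps the states $0$ and $1$ (since $0\cdot\beta = 1$ and $1\cdot\beta = 0$), so the equation $T = S\cdot\beta$ transforms $T\cap\{0,1\} = \{0\}$ into $S\cap\{0,1\} = \{1\}$. Invoking the same characterization in the other direction, this is precisely the condition that $\omega_1$ extends $S = q_0\cdot v_{\mathbf{s}}^{-1}$.

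To close the argument I would apply greediness to the suffix $v_{\mathbf{s}}$: since some letter (namely $\omega_1$) extends $q_0\cdot v_{\mathbf{s}}^{-1}$, the letter of $v$ appearing immediately before $v_{\mathbf{s}}$ must extend this set as well. But that letter is $\beta$, which is permutational and hence preserves the sizes of all preimages, giving the desired contradiction. I do not expect any real obstacle: the whole proof is a short interplay between straightness, greediness, and the observation that $\beta$ swaps the two states relevant to the unitary letters; the only minor bookkeeping point, the identity $T = S\cdot\beta$ for the involutory letter $\beta$, is handled by Lemma~\ref{pos equal to neg}.
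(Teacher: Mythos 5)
Your proof is correct and follows essentially the same route as the paper's: the same factorization around the offending $x\beta$, the same observation that $\beta$ swaps $0$ and $1$, and the same interplay of straightness and greediness via the $\omega$-extensibility criterion $S\cap\{0,1\}\in\{\{0\},\{1\}\}$. The only (immaterial) difference is the direction: the paper uses greediness at the inner suffix to rule out $\omega$-extensibility and contradicts straightness at $x$, whereas you use straightness at $x=\omega_0$ to force extensibility and contradict greediness at the inner suffix.
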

\begin{proof}
Suppose for a contradiction that $v=u''x\beta u'$ for $x \in \{\omega_{0},\omega_{1}\}$. Since $v$ is greedy, $\{q_0\}\cdot (u')^{-1}$ is not $\omega$-extensible, so it contains both $0$ and $1$ or neither of them. Since $\beta$ switches these states, $\{q_0\}\cdot (\beta u')^{-1}$ has the same property. Then $\{q_0\}\cdot (\beta u')^{-1} = \{q_0\}\cdot (x\beta u')^{-1}$ and so $v$ is not straight.
\qed
\end{proof}

\begin{lemma}\label{lem:w greedy}
Let $2\le j\le N-1$.
It holds that:
\begin{enumerate}
\item\label{enu:Qjth} $\left\{ 0,1\right\} \cap (Q_{j}\cdot t_{h})=\emptyset$
for $1\le h<N-j$ if $j$ is even,
\item\label{enu:Rjth} $\left\{ 0,1\right\} \subseteq (R_{j}\cdot t_{h})$\emph{
for $1\le h<j-2$ if $j$ is odd.}
\end{enumerate}
\end{lemma}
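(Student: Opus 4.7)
The plan is to apply Lemma~\ref{lem:expl tr}(\ref{enu:tNj jk}), which gives the explicit formula $q \cdot t_h = (-q - h) \bmod N$, and then reduce both claims to elementary index arithmetic modulo $N$.

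For part (\ref{enu:Qjth}), I would write $Q_j \cdot t_h$ as the block $\{(-h - q) \bmod N : 0 \le q \le j-1\}$ of $j$ consecutive residues, descending from $(-h) \bmod N$. The hypothesis $1 \le h < N - j$ rules out wrap-around: since $h \ge 1$ the largest element is $N - h \le N - 1$, and since $h \le N - j - 1$ the smallest is $N - h - j + 1 \ge 2$. Hence the whole block lies in $\{2, 3, \ldots, N - 1\}$, which is disjoint from $\{0, 1\}$.

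For part (\ref{enu:Rjth}), I would first describe $R_j$ explicitly: from $R_j = Q_j \cdot \beta = \{(1 - q) \bmod N : 0 \le q \le j - 1\}$ one obtains $R_j = \{0, 1\} \cup \{N - k : 1 \le k \le j - 2\}$. Applying $t_h$ to the element $N - k$ yields $(k - h) \bmod N$, so it suffices to find values of $k \in \{1, \ldots, j-2\}$ producing $0$ and $1$. Under the hypothesis $h \le j - 3$, the choice $k = h$ gives $0 \in R_j \cdot t_h$ and $k = h + 1 \le j - 2$ gives $1 \in R_j \cdot t_h$. There is essentially no obstacle here beyond bookkeeping; the parity hypotheses play no role in the computation itself and are present only because these configurations are precisely the ones that arise from the backward tracing pattern of Lemma~\ref{lem:cikcak list}, pairing even-index $Q_j$ with $t_{N-j}$ and odd-index $R_j$ with $t_{j-2}$.
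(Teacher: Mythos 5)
Your proof is correct. The paper starts from the same formula, Lemma~\ref{lem:expl tr}(\ref{enu:tNj jk}), but organizes the check in the opposite direction: since $t_h$ is involutory, $\{0,1\}\cap(S\cdot t_h)=\emptyset$ (resp.\ $\{0,1\}\subseteq S\cdot t_h$) is equivalent to $q\cdot t_h\notin S$ (resp.\ $q\cdot t_h\in S$) for $q\in\{0,1\}$, so only the two images $0\cdot t_h$ and $1\cdot t_h$ need to be located, and membership in $R_j$ is tested by applying $\beta$ once more and checking landing in $Q_j$. You instead compute the full forward images $Q_j\cdot t_h$ and $R_j\cdot t_h$ and read off the answer; this costs slightly more bookkeeping (the explicit description $R_j=\{0,1\}\cup\{N-k \mid 1\le k\le j-2\}$ and the wrap-around check for the interval), but it uses neither the involutivity of $t_h$ nor that of $\beta$, and it yields a complete description of the image sets rather than just the two membership facts. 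Both arguments reduce to the same interval arithmetic, and your closing remark that the parity hypotheses play no role in the computation is consistent with the paper's proof, which likewise never invokes them.
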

\begin{proof}
As $t_{h}$ is involutory, it is enough to show
for $q \in \{0,1\}$ that $q\cdot t_{h}\notin Q_{j}$ or $q\cdot t_{h}\in R_{j}$ respectively.

As for~(\ref{enu:Qjth}), by Lemma~\ref{lem:expl tr}(\ref{enu:tNj jk}) we have $q\cdot t_{h}=N-q-h>j-1$, thus $q\cdot t_{h}\not\in Q_{j}$.
As for (\ref{enu:Rjth}), denoting $q'=q\cdot t_{h}$, we have $q'=N-q-h$.
Then $q'\cdot\beta = (q+h+1)\bmod N$, and since $q\le1$ and $h<j-2$, we get $q'\cdot\beta <j$, which implies $q'\cdot\beta\in Q_{j}$ and $q'\in R_{j}$.
\end{proof}

\begin{lemma}\label{enu:no idle omega}
For each suffix $xu$ of $w$ with $x\in\left\{ \omega_{0},\omega_{1}\right\}$
and $u\in\Sigma^{*}$ it holds that $x$ extends $Q_{2}\cdot u^{-1}$.
\end{lemma}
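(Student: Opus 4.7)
The plan is to notice that the only occurrences of $\omega$-letters inside $w$ are the initial letters of the factors $v_j$, so every suffix $xu$ of $w$ with $x\in\{\omega_{0},\omega_{1}\}$ has the form $v_j\beta v_{j-1}\cdots\beta v_2$ for some $j$ with $2\le j\le N-1$. By the definition of $v_j$, we have $x=\omega_1$ precisely when $j$ is even and $x=\omega_0$ precisely when $j$ is odd; after removing $x$, the tail $u$ is $t_{N-j}\beta v_{j-1}\cdots\beta v_2$ in the even case and $t_{j-2}\beta v_{j-1}\cdots\beta v_2$ in the odd case.

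Next I would read off $Q_{2}\cdot u^{-1}$ directly from the chain of preimages already spelled out just before Lemma~\ref{lem:analysis of w}. The intermediate set $Q_{2}\cdot(\beta v_{j-1}\cdots\beta v_2)^{-1}$, which appears in that chain immediately before the $v_j^{-1}$ step, equals $Q_j$ when $j$ is even and $R_j$ when $j$ is odd. Applying the involutory word $t_{N-j}$ in the even case and $t_{j-2}$ in the odd case, Lemma~\ref{lem:cikcak list}(\ref{enu:QjtNj}),(\ref{enu:Rjtj}) then give
\[
Q_{2}\cdot u^{-1}=\begin{cases}Q_{j+1}^{\diamond}&\text{if }j\text{ is even},\\ R_{j+1}^{\diamond}&\text{if }j\text{ is odd}.\end{cases}
\]

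Finally I would verify the extensibility conditions from the remark preceding Lemma~\ref{lem:analysis of w}. The set $Q_{j+1}^{\diamond}=\{1,\dots,j\}$ contains $1$ and misses $0$, so it is $\omega_1$-extensible, which matches $x=\omega_1$ in the even case. The set $R_{j+1}^{\diamond}=Q_{j+1}^{\diamond}\cdot\beta$ contains $1\cdot\beta=0$ but misses $1$ by definition, so it is $\omega_0$-extensible, which matches $x=\omega_0$ in the odd case. The argument is essentially bookkeeping, and I do not anticipate any substantive obstacle; the only point needing a moment of care is correctly identifying the set at which the chain of preimages stands just before each application of $v_j^{-1}$, together with the small check that $R_{j+1}$ actually contains~$0$.
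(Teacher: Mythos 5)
Your proposal is correct and follows essentially the same route as the paper: both identify each $\omega$-suffix of $w$ as some $v_j\beta v_{j-1}\cdots\beta v_2$, read off $Q_2\cdot u^{-1}$ as $Q_{j+1}^{\diamond}$ or $R_{j+1}^{\diamond}$ from the computation underlying Lemma~\ref{lem:analysis of w}, and conclude extensibility (the paper by computing $Q_{j+1}^{\diamond}\cdot\omega_1^{-1}=Q_{j+1}$ and $R_{j+1}^{\diamond}\cdot\omega_0^{-1}=R_{j+1}$ directly, you equivalently via the intersection-with-$\{0,1\}$ criterion). No issues.
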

\begin{proof}
For every suffix $\omega_{1}u$ we have $Q_{2}\cdot u^{-1}\omega_{1}^{-1} = Q_{j+1}^{\diamond}\cdot\omega_{1}^{-1}=Q_{j+1}$ for some even $j$, and for every suffix $\omega_{0}u$ we have $Q_{2}\cdot u^{-1}\omega_{0}^{-1} = R_{j+1}^{\diamond}\cdot\omega_{0}^{-1}=R_{j+1}$ for some odd $j$.
\qed
\end{proof}
\begin{lemma}\label{lem:w0 je greedy}
The word $w\omega_{0}$ is greedy.
\end{lemma}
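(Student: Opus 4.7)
My plan is to verify greediness suffix by suffix, exploiting the block decomposition $w\omega_0 = v_{N-1}\beta v_{N-2}\beta\cdots\beta v_2\omega_0$ and the backward scheme of preimages already established in Lemma~\ref{lem:analysis of w}. Since $\alpha$ and $\beta$ are involutory (hence permutational), only the unitary letters $\omega_0$ and $\omega_1$ can ever extend a subset. Greediness therefore splits into two assertions, which I would treat in turn: (i) at every position in $w\omega_0$ whose preceding letter is $\omega_0$ or $\omega_1$, that letter does extend the preimage $q_0\cdot v^{-1}$ (with $q_0=0$); and (ii) at every position whose preceding letter is $\alpha$ or $\beta$, the preimage is not $\omega$-extensible.

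Assertion (i) is almost immediate. The $\omega$-letters inside $w$ are the leading letters of the blocks $v_j$, and Lemma~\ref{enu:no idle omega} asserts precisely that each of them extends the preimage reached just after it during the backward pass. The remaining case is the trailing $\omega_0$ of $w\omega_0$, where $\{0\}\cdot\omega_0^{-1}=Q_2$ strictly enlarges $\{0\}$.

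For assertion (ii) I would split the $\alpha/\beta$-positions by their location in the decomposition. Block-boundary positions---the start of a block $v_j$ (preceded by a separator $\beta$), a separator $\beta$ itself (preceded by the terminal $\alpha$ of the block immediately to its left), and the final $\omega_0$ (preceded by the terminal $\alpha$ of $v_2$)---give, by Lemma~\ref{lem:analysis of w} together with the involutivity of $\beta$, preimages among $Q_{j+1}$, $R_{j+1}$, and $Q_2$; each of these contains both $0$ and $1$ and hence fails to be $\omega$-extensible. The remaining $\alpha/\beta$-positions lie strictly inside a $t_i$ block of some $v_j$. For such a position the corresponding suffix takes the form $v=s\beta v_{j-1}\cdots v_2\omega_0$ where $s$ consists of the last $k$ letters of $t_i$ for some $k\in\{1,\ldots,i-1\}$. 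Using the palindromicity of $t_i$ with Lemma~\ref{pos equal to neg}, I would rewrite $q_0\cdot v^{-1}$ as $S\cdot t_k$ when $k$ is odd and as $S\cdot t_{k-1}\cdot\beta$ when $k$ is even, where $S=Q_j$ for even $j$ and $S=R_j$ for odd $j$. The bound $k\le i-1$ translates to $k<N-j$ for even $j$ and $k<j-2$ for odd $j$, so Lemma~\ref{lem:w greedy} yields the dichotomy $\{0,1\}\cap(S\cdot t_k)=\emptyset$ (even $j$) or $\{0,1\}\subseteq(S\cdot t_k)$ (odd $j$); the extra $\beta$ in the even-$k$ case preserves this dichotomy because $\{0,1\}\cdot\beta=\{0,1\}$. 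Either way the preimage is not $\omega$-extensible.

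The main obstacle is the bookkeeping: correctly identifying the preimage at every position, verifying that both $k$ and $k-1$ lie within the range required by Lemma~\ref{lem:w greedy}, and handling the short blocks $v_{N-1}=\omega_1\alpha$, $v_3=\omega_0\alpha$, and $v_2$ where some inner-$t_i$ subcases degenerate or vanish entirely.
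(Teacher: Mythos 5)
Your proposal is correct and follows essentially the same route as the paper: both reduce greediness to Lemma~\ref{enu:no idle omega} at the $\omega$-positions and to Lemma~\ref{lem:w greedy} applied to $Q_{j}\cdot t_{h}$ and $R_{j}\cdot t_{h}$ at the positions lying inside a $t_{i}$-block. The only organizational difference is that the paper picks a shortest greediness-violating suffix and uses minimality (together with $\{0,1\}\cdot\beta=\{0,1\}$) to dismiss suffixes beginning with $\beta$, whereas you compute those preimages directly as $S\cdot t_{k-1}\cdot\beta$ and invoke the same dichotomy; both work, at the cost of the extra bookkeeping you acknowledge.
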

\begin{proof}
Let $u$ be the shortest suffix of $w\omega_{0}$ that violates the greediness,
i.e., suppose that $Q_{1}\cdot u^{-1}$ is $z$-extensible for $z \in \{\omega_{0},\omega_{1}\}$, but $zu$ is not a suffix of $w\omega_{0}$. This simplification works because $Q_{1}\cdot u^{-1}$ cannot be both $\omega_0$-extensible and $\omega_1$-extensible. 
Fix $x \in \Sigma$ such that $xu$ is a suffix of $w$.
Let $u=y u_{\mathbf{s}}$ with $y \in \Sigma$.

If $y \in \{\omega_{0},\omega_{1}\}$ then $Q_{1}\cdot (y u_{\mathbf{s}})^{-1}$ is not $\omega$-extensible. If $x \in \{\omega_0,\omega_1\}$, then $Q_{1}\cdot (y u_{\mathbf{s}})^{-1}$ is $x$-extensible due to Lemma~\ref{enu:no idle omega}. Thus, necessarily $x,y \in \{\alpha,\beta\}$.

Assume $y=\beta$.
If $Q_{1}\cdot\left(y u_{\mathbf{s}}\right)^{-1}$ is $\omega$-extensible, then $Q_{1}\cdot(u_{\mathbf{s}})^{-1}$ is $\omega$-extensible as well due to $0\cdot\beta=1$ and $1\cdot\beta=0$, implying that $u_{\mathbf{s}}$ is a shorter suffix violating the greediness.

Assume $y=\alpha$.
Because $w$ does not contain the factor $\alpha\alpha$, it follows that $x=\beta$. According to~(\ref{eq:def of w}), i.e., the definition of $w$, and
the fact that $v_{N-1}=\alpha$, the factor $xy=\beta\alpha$ occurs only
within the factors $v_{2}\dots,v_{N-2}$. Thus,
\[
yu_{\mathbf{s}}=\alpha\left(\beta\alpha\right)^{i}\beta\left(v_{j-1}\beta v_{j-2}\dots\beta v_{3}\beta v_{2}\right)\omega_{0},
\]
where $\alpha\left(\beta\alpha\right)^{i}$ is a suffix of $v_{j}$. We apply Lemma \ref{lem:analysis of w}:
\begin{enumerate}
\item If $j$ is odd, we get $Q_{2}\cdot\left(v_{j-1}\beta v_{j-2}\dots\beta v_{2}\right)^{-1}=Q_{j}$,
while  $v_{j}=\omega_{0}t_{j-2}$ and $i\le\frac{j-3}{2}$. Then
\[
Q_{1}\cdot\left(y u_{\mathbf{s}}\right)^{-1}=Q_{j}\cdot\left(\alpha\left(\beta\alpha\right)^{i}\beta\right)^{-1}=Q_{j}\cdot\left(t_{h}\beta\right)^{-1}=R_{j}\cdot t_{h}^{-1}=R_{j}\cdot t_{h},
\]
where $h=2i+1$.
We see that $1\le h\le j-2$.
If $h=j-2$, then $\omega_{0}t_{h}=v_{j}$, so $x=\omega_{0}$.
Otherwise we apply Lemma~\ref{lem:w greedy}(\ref{enu:Rjth})
to get $\left\{0,1\right\} \subseteq R_{j}\cdot t_{h}$, which contradicts
that $Q_{1}\cdot\left(y u_{\mathbf{s}}\right)^{-1}$ is $\omega$-extensible.

\item If $j$ is even, we get $Q_{2}\cdot\left(v_{j-1}\beta v_{j-2}\dots\beta v_{2}\right)^{-1}=R_{j}$,
while $v_{j}=\omega_{1}t_{N-j}$ and $i\le\frac{N-j-1}{2}$. Then
\[
Q_{1}\cdot\left(y u_{\mathbf{s}}\right)^{-1}=R_{j}\cdot\left(\alpha\left(\beta\alpha\right)^{i}\beta\right)^{-1}=R_{j}\cdot\left(t_{h}\beta\right)^{-1}=Q_{j}\cdot t_{h},
\]
where $h=2i+1$.
We see that $1\le h\le N-j$.
If $h=N-j$, then $\omega_{1}t_{h}=v_{j}$, so $x=\omega_{1}$.
Otherwise we apply Lemma~\ref{lem:w greedy}(\ref{enu:Qjth}) to get $\left\{ 0,1\right\} \cap Q_{j}\cdot t_{h}=\emptyset$, which contradicts that $Q_{1}\cdot\left(y u_{\mathbf{s}}\right)^{-1}$ is $\omega$-extensible.\qed
\end{enumerate}
\end{proof}

\begin{lemma}\label{lem:exists rw ending with}
There exists a shortest reset word for $\mathcal{A}_{m}$ that ends with $\omega_{0}$ and is greedy.
\end{lemma}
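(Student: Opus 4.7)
The plan is to invoke Lemma~\ref{lem:exists a shortest greedy} to obtain a shortest greedy reset word, and then, if its last letter happens to be $\omega_1$ instead of $\omega_0$, to swap that single final letter for $\omega_0$ without destroying either the reset property or the greediness.

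First I would argue that the last letter of a shortest reset word $u$ cannot lie in $\{\alpha,\beta\}$: both $\alpha$ and $\beta$ act as permutations of $Q$, so writing $u=u'x$ with $x\in\{\alpha,\beta\}$ would give $|Q\cdot u'|=|Q\cdot u|=1$, contradicting minimality. If the last letter is already $\omega_0$ we are done, so assume $u=u'\omega_1$ with $Q\cdot u=\{q_0\}$. Since $\omega_1=(0\to 1)$ acts as the identity off $\{0\}$, the preimage $\omega_1^{-1}(\{q_0\})$ is a singleton unless $q_0=1$; minimality forces $|Q\cdot u'|\ge 2$, so we must have $q_0=1$ and $Q\cdot u'=\{0,1\}$. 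Setting $u''=u'\omega_0$ and using $\omega_0=(1\to 0)$, we get $Q\cdot u''=\{0,1\}\cdot\omega_0=\{0\}$, so $u''$ is a reset word of the same length as $u$ and ends with $\omega_0$.

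The substantive step is checking that $u''$ inherits greediness from $u$. The crux is the identity $\{0\}\cdot\omega_0^{-1}=\{0,1\}=\{1\}\cdot\omega_1^{-1}$: both backward traces, that of $u''$ starting from $\{0\}$ and that of $u$ starting from $\{1\}$, reach the common set $\{0,1\}$ after the last letter and then proceed through the same prefix $u'$. Hence for every suffix $v$ of $u''$ of positive length, $\{0\}\cdot v^{-1}$ equals $\{1\}\cdot (v^{\star})^{-1}$ where $v^{\star}$ is the corresponding same-length suffix of $u$, and the letter preceding $v$ in $u''$ is the same as the letter preceding $v^{\star}$ in $u$. The greediness condition at $v$ is therefore transferred verbatim from $u$ to $u''$. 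For the empty suffix a direct check of all four letters shows that only $\omega_0$ extends $\{0\}$, and $u''$ ends with $\omega_0$, so greediness holds there as well. The main (and only) obstacle is cleanly verifying this symmetric identification of backward traces; everything else reduces to routine preimage computations directly from the definitions of $\omega_0$ and $\omega_1$.
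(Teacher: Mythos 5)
Your proposal is correct and follows the same route as the paper: obtain a greedy shortest reset word via Lemma~\ref{lem:exists a shortest greedy}, note that the last letter must be non-permutational, and if it is $\omega_1$ replace it by $\omega_0$. The paper merely asserts that this swap preserves greediness, whereas you justify it via the identity $\{0\}\cdot\omega_0^{-1}=\{0,1\}=\{1\}\cdot\omega_1^{-1}$, which makes the two backward traces coincide; this is exactly the verification the paper leaves implicit.
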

\begin{proof}
Lemma~\ref{lem:exists a shortest greedy} gives a shortest reset word that is greedy. Clearly, a shortest reset word ends with a non-permutational letter, i.e., $\omega_0$ or $\omega_1$. In the latter case, replacing the ending $\omega_1$ with $\omega_0$ yields a reset word of the same length and preserves greediness.
\qed
\end{proof}

\begin{theorem}\label{thm:w je shortest}
The word $w\omega_{0}$ is a shortest reset word for $\mathcal{A}_m$.
\end{theorem}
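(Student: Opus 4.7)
Apply Lemma~\ref{lem:exists rw ending with} to pick a shortest reset word $u$ that is greedy and ends with $\omega_0$. Proposition~\ref{prop:ShortestIsStraight} makes $u$ straight, and Lemma~\ref{lem:factors o0B,o1B} then rules out $\omega_0\beta$ and $\omega_1\beta$ as factors of $u$. Write $\ell=|u|$ and set $S_i=\{0\}\cdot(u_{\ell-i+1}\cdots u_\ell)^{-1}$, so $S_0=\{0\}$, $S_\ell=Q$, and each step either preserves $|S_i|$ or increases it by one. By the argument in the proof of Lemma~\ref{lem:exists a shortest greedy}, there are exactly $N-1$ size-increasing backward steps, each corresponding to a unitary letter applied at an $\omega$-extensible subset.

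I proceed by induction on $k$, the number of unitaries processed so far in the backward trace. The claim is that after the $k$-th backward unitary the subset coincides (up to the symmetry between $\omega_0$ and $\omega_1$) with the corresponding subset in the backward trace of $w\omega_0$---namely $Q_{k+1}$ or $R_{k+1}$ matching the construction's pattern---and that the number of letters processed so far is at least the analogous count for $w\omega_0$. The base case $k=1$ is immediate: $S_1=Q_2$. For the inductive step, consider the $\{\alpha,\beta\}^*$-gap between two consecutive backward unitaries. Straightness forbids $\alpha\alpha$, $\beta\beta$, non-extending $\omega$-applications, and any step whose resulting subset is contained in a previously visited one; greediness requires an immediate unitary application as soon as an $\omega$-extensible subset is reached; and Lemma~\ref{lem:factors o0B,o1B} forces the final letter of the gap (just before the next backward unitary) to be $\alpha$. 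A case analysis shows that these constraints force the gap to be a palindrome $t_h$ (for the initial segment from $Q_2$) or $\beta t_h$ (for subsequent segments starting from $Q_j$ or $R_j$) of length at least that of the construction's corresponding block, and that the next extension lands precisely on $Q_{k+2}$ or $R_{k+2}$.

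Summing segment lengths over all $k$ from $1$ to $N-1$ yields $|u|\ge (N^2-3)/2=|w\omega_0|$; combined with Lemma~\ref{lem:analysis of w} this gives $\rt(\mathcal{A}_m)=(N^2-3)/2$. The main obstacle I anticipate is the seemingly free choice at subsets $Q_j$ or $R_j$ where both $\alpha$ and $\beta$ are straight steps backward. The key observation is that any such deviation leads to a ``dead-end'' subset at which the only $\alpha,\beta$-moves either return to the previous step or self-loop, and both $\omega$-letters are non-extending, so the alternative branch cannot be continued within a straight reset word, leaving the canonical branch as the only viable one. Verifying this uniformly across all $j$---using the orbit structure of $Q_j,R_j$ under $\langle\alpha,\beta\rangle$ together with the straightness constraint---is the technical crux.
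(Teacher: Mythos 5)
Your overall strategy is the paper's: take a greedy, straight shortest reset word ending with $\omega_0$ (Lemma~\ref{lem:exists rw ending with}, Proposition~\ref{prop:ShortestIsStraight}, Lemma~\ref{lem:factors o0B,o1B}), trace it backward, and force it to coincide with $w\omega_{0}$. The count of exactly $N-1$ backward unitary steps, each applied at an $\omega$-extensible subset, is also correctly extracted from the proof of Lemma~\ref{lem:exists a shortest greedy}. However, the proposal stops exactly where the proof begins. The assertion ``a case analysis shows that these constraints force the gap to be a palindrome $t_h$ \dots and that the next extension lands precisely on $Q_{k+2}$ or $R_{k+2}$'' is the entire content of the theorem, and you yourself flag it as an unverified ``technical crux.'' In the paper this is cases (3) and (4) of the proof of Theorem~\ref{thm:w je shortest}, and it consumes two auxiliary lemmas (Lemma~\ref{lem:cikcak list}(\ref{enu:QtQ}),(\ref{enu:RtR}) and Lemma~\ref{lem:w greedy}) that your sketch has no substitute for.

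More importantly, the mechanism you propose for closing the gap is not how the automaton actually behaves. You claim a deviation at $Q_j$ or $R_j$ hits a ``dead-end'' subset where the only $\alpha,\beta$-moves return or self-loop and both $\omega$-letters are non-extending. That is false: if the trace deviates at $Q_{j+1}$ by taking $\alpha$ where $w$ takes $\beta$, the straightness constraints (no $\alpha\alpha$, no $\beta\beta$) still permit a long alternating continuation through the pairwise distinct subsets $Q_{j+1}\cdot t_{h}$ for $h=1,3,5,\dots$, none of which returns to a previously visited set until $h=N-j$. What kills this branch is a two-part argument: (i) for every $h<N-j$ the set $Q_{j+1}\cdot t_h$ satisfies $Q_{j+1}\cdot t_{h}\cap\{0,1\}=\emptyset$ (this needs the explicit computation $q\cdot t_h=(-q-h)\bmod N$ of Lemma~\ref{lem:expl tr} and the estimate $j\cdot t_h=N-j-h\ge 2$), so no unitary letter can be applied without violating straightness; and (ii) completing the alternation to $t_{N-j}$ maps $Q_{j+1}$ back onto itself (Lemma~\ref{lem:cikcak list}(\ref{enu:QtQ})), again violating straightness. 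The symmetric facts for $R_{j+1}\cdot t_h$ are needed for the odd-$j$ case. Without these computations the ``free choice at $Q_j$ or $R_j$'' is genuinely unresolved, so the proposal as it stands does not prove the lower bound.
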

\begin{proof}
Using Lemma~\ref{lem:exists a shortest greedy}, let $w'\omega_{0}$ be a greedy shortest reset word of $\mathcal{A}_{m}$.
If $w' = w$, we are done, so let $w'\neq w$ and let $w_{\mathbf{s}}$ be the longest common suffix of $w'$ and $w$.

If $w_{\mathbf{s}}=w'$, then $w'$ is a proper suffix of $w$ and so it contains at most $N-3$ letters from $\{\omega_0,\omega_1\}$, which contradicts that $w_{\mathbf{s}}\omega_0$ is a reset word.
So we can write $w=w_{\mathbf{p}}xw_{\mathbf{s}}$ and
$w'=w'_{\mathbf{p}}x'w_{\mathbf{s}}$,
where $x,x'\in\Sigma$ and $x'\neq x$.
We will show that each of the following cases according to $x$ and $x'$ leads to a contradiction:
\begin{enumerate}
\item Suppose that $x\in\left\{\omega_{0},\omega_{1}\right\}$.
Then Lemma~\ref{enu:no idle omega} implies that $Q_{2}\cdot w_{\mathbf{s}}$ is $x$-extensible, which contradicts $x'\neq x$ and the greediness of $w'\omega_{0}$.
\item Suppose that $x'\in\left\{\omega_{0},\omega_{1}\right\}$. According to Proposition~\ref{prop:ShortestIsStraight}, $w'\omega_{0}$ is straight, which implies that $Q_{2}\cdot w_{\mathbf{s}}$ is $x'$-extensible, which contradicts $x'\neq x$ and Lemma~\ref{lem:w0 je greedy}, i.e, the greediness of $w\omega_{0}$.
\item Suppose that $x=\alpha$ and $x'=\beta$.
According to Remark~\ref{enu:po alfa je beta}, $w_{\mathbf{s}}=\epsilon$ or $w_{\mathbf{s}}$ starts with $\beta$.
 The case of $w_{\mathbf{s}}=\epsilon$ contradicts the straightness of $w'\omega_{0}$ because each $x'\in\Sigma\setminus\{\alpha\}$ satisfies $Q_{2}\cdot\left(x'\right)^{-1}=Q_{2}$. The other case implies $\beta\beta$ occurring in $w'$ and thus also contradicts the straightness of $w'\omega_{0}$.
\item Suppose that $x=\beta$ and $x'=\alpha$.
Then $w_{\mathbf{s}}\neq\epsilon$. If $w_{\mathbf{s}}$ starts with $\alpha$ or $\beta$, then either $w'$ or $w$ contains the factor $\alpha\alpha$
or $\beta\beta$, which contradicts the straightness of $w'\omega_{0}$ or the definition of $w$.
Hence, $w_\mathbf{s}$ starts with $\omega_{0}$ or $\omega_{1}$. Since this starts a factor $v_{j}$ for some $j \ge 2$, we can write 
\[
w_{\mathbf{s}}=v_{j}\beta v_{j-1}\dots\beta v_{3}\beta v_{2}.
\]
We consider the following two subcases:

\begin{enumerate}
\item Suppose that $w_{\mathbf{s}}$ starts with $\omega_{1}$.
Note that $j\ge 2$ is even and $Q_{2}\cdot w_{\mathbf{s}}^{-1} = Q_{j+1}$ by Lemma~\ref{lem:analysis of w}.
Let $w_{\mathbf{m}}$ be the longest common suffix of $w_{\mathbf{p}}'x'=w_{\mathbf{p}}'\alpha$ and $t_{N-j}$.
Clearly, $\left|w_{\mathbf{m}}\right|\ge1$. If $w_{\mathbf{m}} = t_{N-j}$, then from Lemma~\ref{lem:cikcak list}(\ref{enu:QtQ}) we have $Q_{j+1}\cdot t_{N-j}=Q_{j+1}$, which contradicts the straightness of $w'\omega_{0}$.
If $w_{\mathbf{m}}= w_{\mathbf{p}}'x'$, then $w'$ starts with $\alpha$ or $\beta$, which contradicts that $w'\omega_{0}$ is a shortest reset word.
It follows that we can write $w'=w_{\mathbf{pp}}'y'w_{\mathbf{m}}w_{\mathbf{s}}$ for $y'\in\Sigma$.
Moreover, as $w'$ does not contain the factors $\alpha\alpha$ and $\beta\beta$, we have $y'\neq\alpha$ and $y'\neq\beta$, so $y' \in \{\omega_{0},\omega_{1}\}$.
Due to Lemma~\ref{lem:factors o0B,o1B}, $w_{\mathbf{m}}$ cannot start with $\beta$, and from the construction of $t_{N-j}$ we have $w_{\mathbf{m}}=t_{h}$ for $h\le N-j-2$.
It holds that $Q_{j+1}\cdot w_{\mathbf{m}}^{-1}=Q_{j+1}\cdot t_{h}=Q_{j}\cdot t_{h}\cup\left\{j\cdot t_{h}\right\}$.
Lemma~\ref{lem:w greedy}(\ref{enu:Qjth}) provides that $\left\{0,1\right\} \cap Q_{j}\cdot t_{h}=\emptyset$.
Also, $j\cdot t_{h}=N-j-h\ge2$.
Together, $Q_{j+1}\cdot w_{\mathbf{m}}^{-1}\cap\left\{ 0,1\right\} = \emptyset$, and thus this set is not $\omega$-extensible, which contradicts $y' \in \{\omega_{0},\omega_{1}\}$ and the straightness of $w'\omega_{0}$.

\item Suppose that $w_{\mathbf{s}}$ starts with $\omega_{0}$.
Note that $j\ge3$ is odd and $Q_{2}\cdot w_{\mathbf{s}}^{-1} = R_{j+1}$ by Lemma~\ref{lem:analysis of w}.
Let $w_{\mathbf{m}}$ be the longest common suffix of $w_{\mathbf{p}}'x'=w_{\mathbf{p}}'\alpha$ and $t_{j-2}$.
Clearly, $\left|w_{\mathbf{m}}\right|\ge1$. If $w_{\mathbf{m}} = t_{j-2}$, then from Lemma~\ref{lem:cikcak list}(\ref{enu:RtR}) we have $R_{j+1}\cdot t_{j-2}=R_{j+1}$, which contradicts the straightness of $w'\omega_{0}$. If $w_{\mathbf{m}}= w_{\mathbf{p}}'x'$, then $w'$ starts with $\alpha$ or $\beta$, which contradicts that $w'\omega_{0}$ is a shortest reset word. It follows that we can write $w'=w_{\mathbf{pp}}'y'w_{\mathbf{m}}w_{\mathbf{s}}$ for $y'\in\Sigma$.
Moreover, as $w'$ does not contain the factors $\alpha\alpha$ and $\beta\beta$, we have $y'\neq\alpha$ and $y'\neq\beta$, so $y' \in \{\omega_{0},\omega_{1}\}$.
Due to Lemma~\ref{lem:factors o0B,o1B}, $w_{\mathbf{m}}$ cannot start with $\beta$, and from construction of $t_{j-2}$ we have $w_{\mathbf{m}}=t_{h}$ for $h\le j-4$.
We have $R_{j+1}\cdot w_{\mathbf{m}}^{-1}=R_{j+1}\cdot t_{h}\supseteq R_{j}\cdot t_{h}$.
Lemma~\ref{lem:w greedy}(\ref{enu:Rjth}) gives $\left\{ 0,1\right\} \subseteq R_{j}\cdot t_{h}$.
Thus, $\left\{ 0,1\right\} \subseteq R_{j+1}\cdot w_{\mathbf{m}}^{-1}$, and thus this set is not $\omega$-extensible, which contradicts $y' \in \{\omega_{0},\omega_{1}\}$ and the straightness of $w'\omega_{0}$. \qed
\end{enumerate}
\end{enumerate}
\end{proof}

\noindent Theorem~\ref{thm:w je shortest} implies that $\rt(\mathcal{A}_m) =|w \omega_0| = \frac{N^{2}-3}{2}$.

\subsection{Extending Words}

The general upper bound $(n-2)(n-1)+1$ for reset thresholds of synchronizing Eulerian DFAs comes from the fact that any proper and non-empty subset of $Q$ is extended by a word of length at most $n-1$ \cite{Kari2003Eulerian}, while in the general case the minimum length of extending words can be quadratic (this was shown recently -- see \cite{KisielewiczSzykula2015SynchronizingAutomataWithExtremalProperties}).
In view of this, our series shows that this bound is tight for infinitely many $n$, and so the upper bound for reset thresholds for this class cannot be improved only by reducing this particular bound. The following remark follows from the analysis in the proof of Theorem~\ref{thm:w je shortest}:
\begin{remark}
The shortest extending word of $\{0,1\}$ in $\mathcal{A}_m$ is $v_2 = \omega_1 \alpha (\beta \alpha)^{(N-3)/2}$ of length $N-1$.
\end{remark}


\section{Experiments}

Using the algorithm from~\cite{KS2013GeneratingSmallAutomata,KKS2016ExperimentsWithSynchronizingAutomata}, we have performed an exhaustive search over small synchronizing Eulerian DFAs.
We verified the bound $(n^2-3)/2$ for the case of binary DFAs with $n \le 11$ states, automata with four letters and $n \le 7$ states, DFAs with eight letters and $n \le 5$ states, and all DFAs with $n \le 4$ states.

For $n \in \{3,4,5,7\}$ the bound $(n^2-3)/2$ is reachable.
For $n=7$, up to isomorphism, we identified 2 ternary examples and 12 quaternary examples which also meet the bound.
It seems that our series $\mathcal{A}_m$ is not unique meeting the bound, as some of the quaternary examples could be generalizable to series with the same reset thresholds.
Also, for the binary case we found that for $n \in \{5,7,8,9,11\}$ the bound $(n^2-5)/2$ is met uniquely by DFAs from the Martyugin's series, but it is not reachable for $n \in \{6,10\}$.

\begin{conjecture}
For $n \ge 3$, $(n^2-3)/2$ is an upper bound for the reset threshold of an $n$-state Eulerian synchronizing automaton.
If $|\Sigma|=2$, then the bound can be improved to $(n^2-5)/2$.
\end{conjecture}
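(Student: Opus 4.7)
The plan is to build on the backward-tracing framework developed in Section~3, extended from the particular series $\mathcal{A}_m$ to arbitrary Eulerian automata. Suppose, for a contradiction, that $\mathcal{A}$ is an Eulerian synchronizing DFA on $n$ states whose reset threshold exceeds $(n^2-3)/2$. Take a shortest reset word $u$ with $Q\cdot u=\{q_0\}$ and, by Proposition~\ref{prop:ShortestIsStraight}, decompose it into successive extending factors $u=w_{n-1}w_{n-2}\cdots w_{1}$, where each $w_i$ is the minimal suffix addition that grows $\{q_0\}\cdot (w_{i-1}\cdots w_1)^{-1}$ into a strictly larger preimage $S_i$, and $|S_i|=i+1$. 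Kari's argument~\cite{Kari2003Eulerian} gives $|w_i|\le n-1$ for every $i$, producing the bound $(n-1)(n-2)+1$. The conjecture is equivalent to the averaging statement $\sum_{i=1}^{n-1}|w_i|\le (n^2-3)/2$.

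First, I would refine Kari's extension lemma so that it returns not only an extending word but also tracks the algebraic witness (the weighted flow argument on the Eulerian underlying digraph) that certifies its length. The goal is a quantitative version: if $S_{i-1}$ admits no extending word of length much smaller than $n-1$, then $S_{i-1}$ (and hence $S_i$) lies in a narrow structural class, characterized by the preimage being close to a coset of a rotational subgroup in the action of a maximal permutational submonoid. This mirrors what happens in our series, where expensive extensions occur only when $S_i$ equals $Q_j$ or $R_j$, and cheap $\beta$-extensions immediately follow.

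Second, I would prove an alternation/averaging lemma: two consecutive extensions $w_i, w_{i+1}$ cannot both be near-maximal. Concretely, $|w_i|+|w_{i+1}|\le n+O(1)$. Summed over $i$, this gives the desired $(n^2-3)/2$ bound. For the binary-alphabet refinement to $(n^2-5)/2$, the much more rigid cycle structure (only two permutations/cycle decompositions to consider) should allow a stronger alternation bound and a case analysis paralleling the Martyugin series, extending the exhaustive verification reported in the Experiments section.

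The main obstacle is the alternation lemma of the second step. No structural theorem of the required strength is currently known for Eulerian automata outside of very constrained subclasses (monotonic, one-cluster, pseudo-Eulerian with additional hypotheses); indeed, the difficulty of this step is precisely why the statement remains a conjecture. A promising but nontrivial route is to combine the averaging trick of Steinberg~\cite{Steinberg2011AveragingTrick}, which exploits uniform stationary distributions on Eulerian digraphs, with the fine combinatorics of straight and greedy reset words from Lemma~\ref{lem:exists a shortest greedy} and Lemma~\ref{lem:factors o0B,o1B}. A secondary difficulty is proving tightness: the experimental data show several non-isomorphic extremal families for small $n$, so any structural classification must be broad enough to encompass all of them while still excluding any automaton beating the bound.
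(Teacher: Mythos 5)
This statement is a \emph{conjecture} in the paper: the authors offer no proof, only an exhaustive computational verification for small $n$ (binary DFAs with $n\le 11$, quaternary with $n\le 7$, etc.) together with the matching lower bound from the series $\mathcal{A}_m$. So there is no paper proof to compare against, and your text should be judged as a proposed proof strategy. As such it has a genuine gap: the entire argument rests on the ``alternation/averaging lemma'' of your second step ($|w_i|+|w_{i+1}|\le n+O(1)$ for consecutive extending factors), and you offer no proof of it --- indeed you concede that no structural theorem of the required strength is known. That lemma is not a technical detail; it \emph{is} the conjecture, repackaged. Kari's argument gives only $|w_i|\le n-1$ per step, and nothing in the paper (nor in Steinberg's averaging trick, which reproves the same $(n-1)(n-2)+1$ bound for a larger class) supplies the claimed anti-correlation between consecutive extension costs. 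The paper's own machinery (straightness, greediness, Lemmas~\ref{lem:factors o0B,o1B} and \ref{lem:w greedy}) is tailored to the specific letters $\alpha,\beta,\omega_0,\omega_1$ of $\mathcal{A}_m$ and does not transfer to arbitrary Eulerian automata; greediness in particular requires all letters to be permutational or unitary (Lemma~\ref{lem:exists a shortest greedy}), which a general Eulerian DFA need not satisfy.

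A secondary but real problem is quantitative: even granting an alternation bound of the form $|w_i|+|w_{i+1}|\le n+c$, summing over roughly $(n-1)/2$ pairs yields $\tfrac{1}{2}(n-1)(n+c)=\tfrac{1}{2}\left(n^2+(c-1)n-c\right)$, which is at most $(n^2-3)/2$ only if $c<1$; yet in the extremal series $\mathcal{A}_m$ itself consecutive extending factors have lengths such as $N-1$ and $3$, summing to $N+2$, so the constant you would need is already violated by the known tight example under the most natural pairing. Recovering the exact constant $-3$ therefore requires a global accounting, not a local two-step bound. In short: the proposal is a reasonable research outline consistent with the paper's framework, but it does not constitute a proof, and the paper does not claim one either.
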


\bibliographystyle{splncs03}

\end{document}